\begin{document}
\title{On the benefits of being constrained when receiving signals}
%
%
\author{Shih-Tang Su\inst{1}\thanks{Supported in part by NSF grant ECCS 2038416 and MCubed 3.0.} \and
David Kempe\inst{2} \and
Vijay G.~Subramanian\inst{1}\thanks{Supported in part by NSF grants ECCS 2038416, CCF 2008130, and CNS 1955777.}}
\authorrunning{S. Su \textit{et al.}}
%
\institute{Electrical Engineering and Computer Science, University of Michigan \and
Department of Computer Science, University of Southern California}
\maketitle              
\begin{abstract}
We study a Bayesian persuasion setting in which the receiver is trying to match the (binary) state of the world. The sender's utility is partially aligned with the receiver's, in that conditioned on the receiver's action, the sender derives higher utility when the state of the world matches the action.

Our focus is on whether in such a setting, being constrained helps a receiver.
Intuitively, if the receiver can only take the sender's preferred action with smaller probability, the sender might have to reveal more information, so that the receiver can take the action more specifically when the sender prefers it.
We show that with a binary state of the world, this intuition indeed carries through: under very mild non-degeneracy conditions, a more constrained receiver will always obtain (weakly) higher utility than a less constrained one.
Unfortunately, without additional assumptions, the result does not hold when there are more than two states in the world, which we show with an explicit example.

\keywords{Bayesian persuasion \and Information design \and Signaling games.}
\end{abstract}
\section{Introduction} \label{sec:intro}
In this paper, we study situations akin to the following stylized dialog, which will likely be familiar to anyone who has ever served on hiring committees:

\vspace{-6pt}
\begin{dialogue}
  \speak{Alice} I see that you wrote strong recommendation letters for your Ph.D.~graduates Carol and Dan. Can you compare them for us?
  \speak{Bob} They are both great! Carol made groundbreaking contributions to $\ldots$; Dan made groundbreaking contributions to $\ldots$.
  \speak{Alice} Which of the two would you say is stronger?
  \speak{Bob} They are hard to compare. You really need to interview both of them!
  \speak{Alice} We can only invite one of them for an interview.
  \speak{Bob} I guess Carol is a bit stronger.
\end{dialogue}
\vspace{-4pt}

What happened in this example? Alice and Bob were involved in a signaling setting, in which Bob had an informational advantage. Bob's goal was to get as many of his students interviews as possible, while 
Alice wanted to only invite the strong students. While Bob knew which of his students were strong (or how strong), Alice had to rely on 
information from Bob. As is standard in signaling settings, Bob could use this fact to improve his own utility. In this sense, the example initially was virtually identical to the standard ``judge/prosecutor'' example in the seminal paper of Kamenica and Gentzkow~\cite{kamenica:gentzkow:bayesian}.

However, a change happened along the way. When Alice revealed that she was constrained in her actions (to one interview at most), this changed the utility that Bob could obtain from his previous strategy. For example, if he had insisted on not ranking the students, Alice might have flipped a coin. Implicitly, while Bob wanted \emph{both} of his students to obtain interviews, when forced to choose, he knew he would obtain higher utility from the stronger of his students being interviewed. In this sense, his utility function was ``partially aligned'' with Alice's; this partial alignment, coupled with Alice's constraint, resulted in Alice obtaining more information, and thus higher utility.

The main goal of the present paper is to investigate to what extent the behavior illustrated informally in the dialog above arises in a standard model of Bayesian persuasion. Specifically, if the utilities of the sender and receiver are ``partially aligned,'' will it always benefit a receiver to be more constrained in how she can choose her actions?

\vspace{-8pt}
\subsection{The Model: An Overview} \label{sec:model-overview}
\vspace{-4pt}
Our model --- described 
fully in Section~\ref{sec:prob} --- is based on the standard Bayesian persuasion model of Kamenica and Gentzkow~\cite{kamenica:gentzkow:bayesian}. For our main result, we assume that the state space is binary: $\StateSpace = \SET{\State{1}, \State{2}}$. These states could correspond to a student being bad/good in our introductory example, a defendant being innocent/guilty in the example of Kamenica and Gentzkow~\cite{kamenica:gentzkow:bayesian}, or a stock 
about to go up or down.
The sender and receiver share a common prior \PRIOR for the distribution of the state \STATE.
In addition, the sender will observe the actual state \STATE, but only after committing to a signaling scheme (also called information structure).

A signaling scheme is a (typically randomized) mapping $\SIGSCHEME: \StateSpace \to \SignalSpace$.
The receiver observes the (typically random) signal $\SIGNAL = \SigScheme{\STATE}$; based on this observation, she takes an action $\ACT \in \ActSpace$.
Here, we assume that --- like the state space --- the action space is binary, i.e., $\ActSpace = \SET{\Act{1}, \Act{2}}$.
Based on the true state of the world and the action taken by the receiver, both the sender and receiver derive utilities \Util{S}{\STATE}{\ACT}, \Util{R}{\STATE}{\ACT}.
The receiver will choose her action (upon observing \SIGNAL) to maximize her own expected utility; the sender, knowing that the receiver is rational, will commit to a signaling scheme to maximize his expected utility under rational receiver behavior.

Motivated by many practical applications, we assume that the receiver prefers to match the state of the world, in the sense that
$\Util{R}{\State{1}}{\Act{1}} \geq \Util{R}{\State{1}}{\Act{2}}$
and $\Util{R}{\State{2}}{\Act{2}} \geq \Util{R}{\State{2}}{\Act{1}}$.
For instance, in our introductory example, Alice prefers to interview strong candidates and to not interview weak ones; in the judge-prosecutor example, the judge prefers convicting exactly the guilty defendants; and an investor prefers to buy stocks that will go up and sell stocks that will go down.
Our assumption about the ``partial alignment" of the sender and receiver utilities is formalized as an \emph{action-matching} preference of the sender, stated as follows: 
$\Util{S}{\State{1}}{\Act{1}} \geq \Util{S}{\State{2}}{\Act{1}}$
and $\Util{S}{\State{2}}{\Act{2}} \geq \Util{S}{\State{1}}{\Act{2}}$.
That is, if a candidate is being interviewed, Bob prefers it to be a strong candidate over a weak one (but may still prefer a weak candidate being interviewed over a strong/weak candidate \emph{not} being interviewed); similarly, if a prosecutor sees a defendant convicted, he would prefer the defendant to be guilty (but may still prefer an innocent defendant being convicted over 
going free); similarly, an investment platform may want to entice a client to buy stock, but conditioned on the client buying stock, the platform may prefer for the stock to go up.

In addition to the assumption of partial alignment, our main addition to the standard Bayesian persuasion model is to consider constraints on the receiver's actions.
Specifically, we assume that there are (lower and upper) bounds \LB, \UB on the probability with which the receiver is allowed to take action
\footnote{This implies constraints of $1-\UB, 1-\LB$ on the probability of taking action \Act{2}. A more general model and its specialization to binary actions is discussed in Section~\ref{sec:constrained-receiver}.} \Act{1}.
Such a constraint corresponds to a department only being willing to interview at most 10\% of their applicants, a judge having a quota for how many defendants (at most) to convict, or a conference having an upper bound on its number/fraction of accepted papers.
Such a constraint creates dependencies between the receiver's actions under different received signals, and may force her to randomize between actions, contrary to the standard Bayesian persuasion setting in which the receiver may deterministically choose any utility-maximizing action conditioned on the observed signal \SIGNAL.
To see this, consider a prior under which a candidate is strong with probability \third, and the receiver obtains utility 1 from interviewing a strong candidate and -1 from interviewing a weak candidate (and 0 from not interviewing). If the sender reveals no information, the receiver would prefer to interview no candidates, but a lower-bound constraint may force her to do so, in which case she would randomize the decision to interview the smallest total number of candidates.
We write $\BESTRESPONSE: \SignalSpace \to \ActSpace$ for the receiver's (typically randomized) mapping from signals to actions.
Note that the constraint applies across all sources of randomness (the state of the world, the sender's randomization, and the receiver's randomization), so it is required that $\LB \leq \Prob[\PRIOR, \SIGSCHEME, \BESTRESPONSE]{\BestResponse{\SigScheme{\STATE}}=\Act{1}} \leq \UB$.

To avoid trivialities, we assume that $\Prob[\PRIOR]{\STATE = \State{1}} \in [\LB, \UB]$, that is, if the sender revealed the state of the world perfectly, the receiver would be allowed to match it.
We say that a receiver with constraints $(\LB['], \UB['])$ is \emph{more constrained} than one with constraints $(\LB, \UB)$ iff $\LB['] \geq \LB$ and $\UB['] \leq \UB$.
\vspace{-8pt}
\subsection{Our Results}
\vspace{-4pt}
Our main result is that when the state of the world is binary, a receiver is always (weakly) better off when more constrained. We state this result here informally, and revisit it more formally (and prove it) in Section~\ref{sec:biresult}.

\begin{theorem}[Main Theorem (stated informally)]
  \label{thm:main-theorem-informal}
  Consider a Bayesian persuasion setting in which the state and action spaces are binary, the receiver is trying to match the state of the world, and the sender is action-matching.
  Then, a more constrained receiver always obtains (weakly) higher utility than a less constrained one.
\end{theorem}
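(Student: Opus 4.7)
The plan is to exploit the two-dimensional structure of the binary state/action problem. Let $x = \Pr[\omega = s_1,\, a = a_1]$ and $y = \Pr[\omega = s_2,\, a = a_1]$. Both the sender's and the receiver's expected utilities are then affine functions of $(x,y)$ with coefficients $(d_{S,1}, d_{S,2})$ and $(d_{R,1}, d_{R,2})$ read off from the primitive payoffs. The receiver's matching assumption gives $d_{R,1} \ge 0 \ge d_{R,2}$ (she wants $x$ large and $y$ small), while action-matching gives $d_{S,1} \ge d_{S,2}$, which says that---whatever the signs---the sender weakly prefers concentrating the probability of $a_1$ on state $s_1$ rather than state $s_2$. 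I will reformulate the sender's problem as a linear program in $(x,y)$ over the set of outcomes implementable by some signaling scheme plus receiver best-response, and then track how the solution moves as the pair $(\ell, u)$ tightens.

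First I would reduce to signaling schemes with at most two signals via a standard Carath\'eodory-type argument: with binary state the posterior is one-dimensional, and the receiver's best response is a threshold rule (possibly randomizing at the threshold to satisfy the global constraint), so two posteriors $q_L \le q_H$ with weights $\lambda_L + \lambda_H = 1$ and Bayes-plausibility $\lambda_H q_H + \lambda_L q_L = p$ suffice. Given such a scheme, the receiver's best response is fully determined by which of $\ell, u$ is binding, and the set of implementable $(x,y)$ becomes a tractable union of linear pieces in the parameters $(q_L, q_H, \lambda_H)$.

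Next, I would characterize the sender's optimum by case analysis on the sign pattern of $(d_{S,1}, d_{S,2})$ permitted by $d_{S,1} \ge d_{S,2}$. In the persuasion case $d_{S,1} \ge d_{S,2} > 0$, only the upper constraint can bind: for $u \ge p/q^*$ the sender's optimum is the Kamenica--Gentzkow scheme, whereas for $p \le u \le p/q^*$ it is the two-signal scheme with $q_L = 0$, $q_H = p/u$, $\lambda_H = u$, yielding $(x^*, y^*) = (p,\, u-p)$. The symmetric dissuasion case $0 > d_{S,1} \ge d_{S,2}$ is handled by relabeling $a_1 \leftrightarrow a_2$ and $s_1 \leftrightarrow s_2$, giving the dual optimum $(x^*, y^*) = (\ell, 0)$ via the scheme $q_H = 1$, $q_L = (p-\ell)/(1-\ell)$, $\lambda_H = \ell$. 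In the aligned case $d_{S,1} > 0 > d_{S,2}$ the sender simply reveals the state, and neither constraint binds since $p \in [\ell, u]$. The role of the ``mild non-degeneracy'' assumption is to enforce strict action-matching $d_{S,1} > d_{S,2}$, which guarantees that, among implementable outcomes sharing a given $\Pr[a_1] = x+y$, the sender strictly prefers the one maximizing $x$---which is exactly the receiver's preferred outcome---so no adversarial tie-breaking can hurt the receiver.

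Plugging these optima into $V_R = x\,d_{R,1} + y\,d_{R,2} + \mathrm{const}$ yields $V_R = p\,d_{R,1} + (u-p)\,d_{R,2} + \mathrm{const}$ in the upper-binding regime and $V_R = \ell\,d_{R,1} + \mathrm{const}$ in the lower-binding regime; both are monotone in the direction the theorem claims, because $d_{R,1} \ge 0 \ge d_{R,2}$. In the non-binding regions $V_R$ is constant, and a short continuity check at the regime boundaries completes the argument. The main obstacle I anticipate is the rigorous treatment of implementability once the global probability constraint couples the receiver's behavior across signals: I will need to show that no scheme with three or more signals can outperform the two-signal optima identified above, and to handle the measure-zero boundary cases (e.g., $u = p$, $u = p/q^*$, or ties at the receiver's threshold) by invoking the non-degeneracy assumption to rule out adversarial tie-breaking.
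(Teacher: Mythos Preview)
Your plan is correct and follows essentially the same route as the paper: both reduce to two-signal direct schemes, perform the same case split on the signs of $d_{S,1}$ and $d_{S,2}$, pin down the sender-optimal outcome in each case (the paper works with the conditional probabilities $\pi_{j\mid i}$ rather than your $(x,y)$, but the computations are identical), and read off monotonicity of the receiver's value in whichever bound binds. Two small remarks: the paper's Proposition~\ref{prop:correspondingpure} (push the receiver's randomization into the sender, then collapse signals by recommended action) gives a clean one-line resolution of the ``coupling across signals'' obstacle you flag at the end; and the paper handles non-degeneracy not by assuming strict action-matching $d_{S,1} > d_{S,2}$ but by selecting, among all sender-optimal schemes, the one maximizing the receiver's utility---this also disposes of the boundary case $d_{S,1} \le 0 \le d_{S,2}$ (which, combined with $d_{S,1} \ge d_{S,2}$, forces the sender to be globally indifferent) that your trichotomy leaves out.
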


Unfortunately, this insight does not extend to more fine-grained states of the world: even for a ternary state of the world, there are examples with partially aligned sender and receiver in which a more constrained receiver is strictly worse off. We discuss such an example in depth in Section~\ref{sec:general}.
It is possible to obtain some positive results recovering versions of Theorem~\ref{thm:main-theorem-informal} by imposing additional constraints on the sender's and receiver's utility functions. However, many of these constraints are strong, and have only limited applicability to real-world settings. We discuss some of these approaches in Section~\ref{sec:discuss} --- whether there are less restrictive conditions recovering Theorem~\ref{thm:main-theorem-informal} for more states of the world is an interesting direction for future work.

\vspace{-8pt}
\subsection{Related Work} \label{sec:literature}
\vspace{-4pt}
In general, information design as an area is concerned with situations in which a better-informed sender or information designer can influence the behavior of other agents via the provision of information.
The literature generally studies problems in which the underlying game between the agents is given and fixed, but where the sender can influence the outcome by an appropriate choice of information to be disclosed.
The core difference between Bayesian persuasion \cite{kamenica:gentzkow:bayesian,bergemann:morris:robust:2013,bergemann:morris:bayes:2016,kamenica:bayesian:2019,bergemann:morris:information:2019} and other standard paradigms that study information transmission (such as cheap talk~\cite{crawford:sobel:strategic}, verifiable messages~\cite{grossman:informational:1981,milgrom:good:1981} or signaling games~\cite{spence:job-signaling}) is the commitment power of the sender.
In Bayesian persuasion models, the sender moves first and commits to a (typically randomized) mapping from states of the world to signals.
Subsequently, the sender observes the state of the world and applies the mapping.
Based on the mapping and the observed signal, the rational recipients (called receivers) choose actions.

The study of Bayesian persuasion was initiated in the seminal work of Kamenica and Gentzkow~\cite{kamenica:gentzkow:bayesian} and Rayo and Segal~\cite{rayo:segal:optimalinfo:2010}.
In their work, the sender can \emph{commit to sending any distribution of messages} before (accurately) observing the state of the world; the receiver, on the other hand, only has knowledge of the prior.
The full commitment setting allows for an equivalence to an alternate model where the sender publicly chooses the amount of information (regarding the state of the world) he will privately observe and then (strategically) decides how much of this information to share with the receiver via verifiable messages.
Follow-up work of Bergemann and Morris~\cite{bergemann:morris:robust:2013,bergemann:morris:bayes:2016} established a useful and important equivalence between the set of outcomes achievable via information design and Bayes correlated equilibria.
Since these seminal works, there has been a large body of work on Bayesian persuasion with theoretical developments as well as a multitude of applications.
To keep our discussion focused, for the broader literature, we refer the reader to survey articles~\cite{kamenica:bayesian:2019,bergemann:morris:information:2019}.

The literature closest to our work studies information design with a constrained sender: the constraints arise through a diverse set of assumptions. The work in \cite{perez:prady:complicating:2012,perez:interim:2014} shows that pooling equilibria result if the receiver either prefers lower complexity (for a certification process) or performs a validation of the sender's signal; this holds whether the signals of the sender are exogenously constrained or not.
A growing body of work considers constraints on the sender that arise either due to communication costs for signaling~\cite{gentzkow:kamenica:costly:2014,hedlund:costly:2015,carroni:ferrari:pignataro:quality:2020,nguyen:tan:bayesian:2021}, capacity limitations for signaling~\cite{LimitedSignaling,letreust:tristan:persuasion:2019}, the sender's signal serving multiple purposes (such as convincing a third party to take a payoff-relevant action) \cite{boleslavsky:kim:bayesian:2018}, or costs to the receiver for acquiring additional information~\cite{matyskova:bayesian:2018}.
The contributions are then to characterize either the applicability of the concavification approach \cite{kamenica:gentzkow:bayesian}, the optimal signaling structure, or the conditions for the optimality of certain signaling structures. In \cite{kolotilin:experimental:2015}, constraints on the sender arise from the receiver having access to some publicly available information. Within this context, Kolotilin~\cite{kolotilin:experimental:2015} studies comparative statics on the sender's utility based on the quality of the sender's information or the public information.
There is also a burgeoning literature on constraints on the sender arising from a privately informed receiver (e.g., \cite{kolotilin:mylovanov:zapechelnyuk:li:persuasion:2017,guo:shmmaya:interval:2019,doval:skreta:constrained:2018,candogan:reduced:2020,candogan:strack:optimal:2021}). The main contributions in this line of research are to characterize the optimal signaling structure with a key aspect being the fact that the sender constructs a different signal for each receiver type. 

Based on the discussion above, clearly there is significant literature studying a constrained sender's optimal signaling scheme and utility. However, work that studies constraints on the \emph{receiver}, or their impact on the receiver's utility, is extremely limited. To the best of our knowledge, \cite{babichenko:cohen:zabarnyi:bayesian:2020} is the only work to analyze a constrained receiver problem. The authors impose \emph{ex ante} and \emph{ex post} constraints on the receiver's posterior beliefs, characterize the dimensionality of the optimal signaling structure and develop low-complexity approximate welfare maximizing algorithms. In our work, we have two important differences: first, we impose constraints on the receiver's \emph{actions} as opposed to posterior beliefs; and second, we explore when these constraints result in increased utility for the receiver. 
\vspace{-8pt}

\vspace{-2pt}
\section{Problem Formulation} \label{sec:prob}
\vspace{-6pt}
Our model is based on the standard Bayesian persuasion model \cite{kamenica:gentzkow:bayesian}.
Two players, a sender and a receiver, interact in a signaling game.
The sender can observe the state of the world, while the receiver can take an action. The sender can convey information about the state of the world to the sender. Both players receive utility as a function of both the state of the world and the action chosen by the receiver. Since their utility functions typically do not align, the sender will be strategic in the information he reveals to the receiver.

\vspace{-6pt}
\subsection{State of the World, Actions, and Utilities}
The (random) state of the world \STATE is drawn from a state space \StateSpace. For our main result, we assume that the state space is binary ($\StateSpace = \SET{\State{1},\State{2}}$); however, we define the model in more generality.
The sender and receiver share a common-knowledge prior distribution $\PRIOR \in \Distribution{\StateSpace}$ for \STATE. When the state space is binary, this prior is fully characterized by $\PriorProb = \Prob[\PRIOR]{\STATE = \State{1}}$.

Only the receiver can take an action $\ACT \in \ActSpace$. Again, for our main result, we assume that the action space is binary: $\ActSpace = \SET{\Act{1},\Act{2}}$.
Both the sender's and the receiver's utilities are functions of the true state \STATE and the action taken; they are captured by the functions
$\UTIL{S}: \StateSpace \times \ActSpace \to \R$ and
$\UTIL{R}: \StateSpace \times \ActSpace \to \R$.
As discussed in Section~\ref{sec:model-overview}, we assume that the receiver tries to match the state of the world with her action.%

\begin{definition}[State-Matching Receiver]
\label{def:state-matching}
We say that the receiver's utility function is \emph{state-matching} if it satisfies the following: for all $i, j, k$ with $i \leq j \leq k$ or $i \geq j \geq k$, we have that \vspace{-8pt}
\begin{align}
  \Util{R}{\State{i}}{\Act{j}} & \geq \Util{R}{\State{i}}{\Act{k}}.
\label{eqn:state-matching-receiver-general}
\end{align}                                 
When the state of the world is binary, the condition simplifies to:
\begin{align}
   \Util{R}{\State{1}}{\Act{1}} \geq \Util{R}{\State{1}}{\Act{2}}  \text{  and  } 
  \Util{R}{\State{2}}{\Act{2}}  \geq \Util{R}{\State{2}}{\Act{1}}.
\label{eqn:state-matching-receiver}
\end{align}
\end{definition}
In words, a state-matching receiver always prefers an action closer to
the true state;
however, the definition does not enforce any comparisons between choosing an action that is too high vs.~too low compared to the true state.

The key notion for our analysis is a partial alignment of the sender's utility with the receiver's. This is captured by the fact that the sender, given any fixed action, would prefer states closer to the action, expressed in Definition~\ref{def:partial-alignment}:

\begin{definition}[Action-Matching Sender]
\label{def:partial-alignment}
We say that the sender's utility function is \emph{action-matching} if it satisfies the following: for all $i, j, k$ with $i \leq j \leq k$ or $i \geq j \geq k$, we have that 
\begin{align}
  \Util{S}{\State{j}}{\Act{i}} & \geq \Util{S}{\State{k}}{\Act{i}}.
\label{eqn:partial-alignment-general}
\end{align}                                 
When the state of the world is binary, the condition simplifies to:
\begin{align}
   \Util{S}{\State{1}}{\Act{1}}  \geq \Util{S}{\State{2}}{\Act{1}} \text{ and }
  \Util{S}{\State{2}}{\Act{2}}  \geq \Util{S}{\State{1}}{\Act{2}}.
\label{eqn:partial-alignment}
\end{align}
\end{definition}

In words, an action-matching sender always prefers a state of the world closer to the action chosen by the receiver; again, we do not enforce any comparisons between states that are higher vs.~lower than the chosen action.

Notice the difference between Inequalities~\eqref{eqn:partial-alignment-general} and \eqref{eqn:partial-alignment} vs.~\eqref{eqn:state-matching-receiver-general} and \eqref{eqn:state-matching-receiver}: \eqref{eqn:state-matching-receiver-general} and \eqref{eqn:state-matching-receiver} compare the receiver's utilities when the state of the world is fixed and the action is changed, while \eqref{eqn:partial-alignment-general} and \eqref{eqn:partial-alignment} compare the sender's utilities when the action is fixed and the state of the world is changed. That is, given that the receiver takes a particular action, the sender derives higher utility when that action more closely matches the state of the world than when it does not. Again, a justification for this assumption is discussed in Section~\ref{sec:model-overview}.

\subsection{Signaling Schemes}
Before the receiver takes her action, the sender can send a signal \SIGNAL to reveal (partial) information about the state of the world.
More precisely, prior to observing the state of the world \STATE, the sender commits to a signaling scheme \SIGSCHEME, which is a mapping
$\SIGSCHEME: \StateSpace \to \Distribution{\SignalSpace}$.
For our purposes \SIGSCHEME is conveniently characterized by the probability with which each signal is sent conditional on the state.
We write $\SigProbC{i}{j} = \ProbC{\SigScheme{\STATE} = \Signal{j}}{\STATE = \State{i}} \in [0,1]$ for the probability that signal \Signal{j} is sent conditional on the state of the world being \State{i}.
We write $\SigProb{j} = \sum_i \Prob[\PRIOR]{\STATE = \State{i}} \cdot \SigProbC{i}{j}$ for the probability of sending the signal \Signal{j}.


The receiver is Bayes-rational, and her objective is to maximize her expected utility after observing the signal. The expected utility derived from action \ACT when observing \Signal{j} can be written as
\begin{align*}
\ExpUtil{R}{\Signal{j}}{\ACT}
  & =  \sum_{\State{i}\in\StateSpace}
    \ProbC{\STATE = \State{i}}{\SigScheme{\STATE} = \Signal{j}} \cdot \Util{R}{\State{i}}{\ACT}
    \;=\; \sum_{\State{i}\in\StateSpace}
  \tfrac{\Prob{\STATE = \State{i}} \cdot \SigProbC{i}{j}}{\SigProb{j}} \cdot \Util{R}{\State{i}}{\ACT}.
\end{align*}
Thus, barring other constraints (which we will introduce below), the receiver chooses an action \ACT in $\argmax_{\ACT} \ExpUtil{R}{\Signal{j}}{\ACT}$.
Following most of the literature in the field of information design, we assume that the receiver breaks ties in favor of an action most preferred by the sender. The following very useful alternative view has been observed in the prior literature (see, e.g., \cite{bergemann:morris:correlated}): instead of sending abstract signals, the sender can without loss of generality send the receiver a recommended action \Act{j}. The sender must ensure that \SIGSCHEME is such that the receiver will always voluntarily follow the recommendation. In other words, the recommended action \Act{j} must always be in $\argmax_{\ACT} \ExpUtil{R}{\Signal{j}}{\ACT}$. This constraint ensures \emph{ex-post incentive compatibility (EPIC)} of the signaling scheme, and is often referred to as an \emph{obedience constraint}.

We write $\BESTRESPONSE: \SignalSpace \to \Distribution{\ActSpace}$ for the receiver's (possibly randomized) best-response function.
In the setting described so far, there is actually no need for the receiver to randomize, and she can always choose any arbitrary deterministic $\BestResponse{\Signal{j}} \in \argmax_{\ACT} \ExpUtil{R}{\Signal{j}}{\ACT}$.
However, as we will see in Section~\ref{sec:constrained-receiver}, the situation changes when the receiver is constrained.
For a receiver strategy \BESTRESPONSE, we write $\ActProb{i}{j} = \Prob{\BestResponse{\Signal{j}} = \Act{i}}$ for the probability that the receiver, upon observing signal \Signal{j}, chooses action \Act{i}.

The sender's objective is to design a signaling strategy which maximizes his expected utility in the subgame perfect equilibrium. That is, he chooses \SIGSCHEME so as to maximize his expected utility (under all sources of randomness)
\begin{align*}
  \Expect[{\STATE \sim \PRIOR, \SIGNAL \sim \SigScheme{\STATE}, \ACT \sim \BestResponse[\SIGSCHEME]{\SIGNAL}}]{\Util{S}{\STATE}{\ACT}},
\end{align*}
assuming a best response \BESTRESPONSE[\SIGSCHEME] from the receiver.

\subsection{Constrained Receiver} \label{sec:constrained-receiver}
Our main conceptual departure from prior work is that we consider constraints on the receiver, restricting the probability with which actions can be chosen.
In a general setting, such constraints are lower and upper bounds on the probability of taking each action, i.e., \LB[\ACT] and \UB[\ACT] for each \ACT.
Formally, we require that for each action \Act{i}, the combination of the sender's signaling scheme \SIGSCHEME and the receiver's response \BESTRESPONSE satisfy \vspace{-6pt}
\begin{align}
  \LB[\Act{i}] & \leq \sum_j \SigProb{j} \cdot \ActProb{i}{j}
              \; \leq \; \UB[\Act{i}].
  \label{eqn:receiver-constraint}
\end{align}

The constraints are common knowledge among the sender and receiver.
When the state space is binary, the constraints can be simplified:
they are fully characterized by the lower and upper bounds $\LB = \max(\LB[\Act{1}], 1-\UB[\Act{2}]), \UB = \min(\UB[\Act{2}], 1-\LB[\Act{1}])$ for the probability with which the receiver can choose action \Act{1}.

The focus of our work is on whether being (more) constrained helps the receiver, by forcing an action-matching sender to disclose ``more'' information.
Without any further assumptions, this is trivially false.
For example, suppose that the state of the world is known to be \State{1} with probability 1, and both the sender and the receiver obtain utility 1 when the receiver chooses action \Act{1}, and 0 otherwise.
If the constraint specified that \Act{1} must be taken with probability 0, and \Act{2} with probability 1, then of course, the receiver (and the sender) would be worse off.
In order to allow us to clearly articulate the question of whether a constrained receiver obtains more information, we require that perfect state matching would always be feasible for the receiver, if the true state were revealed:

\begin{definition}[Implementable and Feasible Constraints]
  \label{def:feasible-constraint}
 Consider constraints \Constr{\LB[\Act{i}]}{\UB[\Act{i}]} for all $\Act{i} \in \ActSpace$.
  We say that the constraints are \emph{implementable} iff
  $\sum_i \LB[\Act{i}] \leq 1 \leq \sum_i \UB[\Act{i}]$.

  The constraints are \emph{feasible} iff $\LB[\Act{i}] \leq \Prob[\PRIOR]{\STATE = \State{i}} \leq \UB[\Act{i}]$ for all $i$.

  For the special case of a binary state space, a constraint \Constr{\LB}{\UB} is feasible iff $\LB \leq \PriorProb \leq \UB$.
\end{definition}

Notice that when constraints are not implementable, there is no strategy for the receiver to satisfy all constraints. When constraints are feasible, then with full information, perfect state matching can be implemented by the receiver.

We say that the constraints \Constr{\LB[\Act{i}]}{\UB[\Act{i}]} are \emph{more binding} (or the receiver is more constrained by them) than \Constr{\LBP[\Act{i}]}{\UBP[\Act{i}]} if and only if $\LBP[\Act{i}] \leq \LB[\Act{i}]$ and $\UB[\Act{i}] \leq \UBP[\Act{i}]$ for all $i$. 
When the state space is binary, the condition simplifies: the constraint \Constr{\LB}{\UB} is more binding than \Constr{\LB[']}{\UB[']} if and only if $\LB['] \leq \LB$ and $\UB \leq \UB[']$.

We note that the presence of a constraint may force the receiver to randomize between actions, even possibly actions that are not optimal.
For a simple example, suppose that the state of the world is binary and determined by a fair coin flip, and the receiver obtains utility 2 from matching state \State{2}, 1 from matching state \State{1}, and 0 for not matching the state. 
If the sender reveals no information, then a receiver constrained by --- say --- $\LB = \UB = \half$, would have to flip a fair coin to decide which action to choose, even though the optimal strategy would be to always choose \Act{2}.

While the receiver's best response \BESTRESPONSE may in general (have to) be randomized, we show that there is always an optimal signaling strategy for the sender such that the receiver will play a deterministic strategy \BESTRESPONSE.
Notice that the following proposition does not even require feasibility in the sense that the prior distribution satisfies the constraints: it merely requires that the constraints allow for the existence of \emph{any} signaling scheme and corresponding receiver strategy.

\begin{proposition} \label{prop:correspondingpure}
  Assume that $\SetCard{\SignalSpace} \geq \SetCard{\ActSpace}$, and let \Constr{\LB[\Act{i}]}{\UB[\Act{i}]} (for all $i$) be implementable constraints on the receiver. Then, for any signaling scheme \SIGSCHEME[\hat], there exists another signaling scheme \SIGSCHEME under which the sender has at least the same utility as under \SIGSCHEME[\hat], and such that the receiver's best response $\BESTRESPONSE[\SIGSCHEME]$ is deterministic. In particular, there is a sender-optimal strategy under which the receiver responds deterministically.
\end{proposition}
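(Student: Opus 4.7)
My plan is to adapt the standard ``recommendation principle'' from Bayesian persuasion to the constrained-receiver setting. Given any signaling scheme $\SIGSCHEME[\hat]$ together with a best response $\hat{\rho}$ (possibly randomized) for the receiver under the given constraints, I will construct a new scheme \SIGSCHEME whose signal space we identify with \ActSpace (permitted since $\SetCard{\SignalSpace} \geq \SetCard{\ActSpace}$), together with a deterministic receiver rule \BESTRESPONSE that gives the sender at least the same utility and is itself a valid best response under the same constraints.

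Concretely, let $\hat{q}_{k,j}$ denote the probability that $\SIGSCHEME[\hat]$ emits signal \Signal{j} in state \State{k}, and let $\hat{r}_{i,j}$ be the probability that $\hat{\rho}$ plays action \Act{i} upon observing \Signal{j}. I define \SIGSCHEME on the signal space \ActSpace by letting the probability of signal \Act{i} in state \State{k} equal $\sum_j \hat{q}_{k,j}\, \hat{r}_{i,j}$, i.e., the overall probability of playing \Act{i} in state \State{k} under $(\SIGSCHEME[\hat], \hat{\rho})$. I then take \BESTRESPONSE to be the deterministic rule that plays action \Act{i} whenever the signal labeled \Act{i} is received. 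By construction, $(\SIGSCHEME, \BESTRESPONSE)$ induces exactly the same joint distribution on $(\STATE, \ACT)$ as $(\SIGSCHEME[\hat], \hat{\rho})$; thus the sender's expected utility is preserved, and each action's overall probability is preserved, so \BESTRESPONSE satisfies the constraints~\eqref{eqn:receiver-constraint}.

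The main step is then to verify that \BESTRESPONSE is actually optimal for the receiver against \SIGSCHEME subject to those constraints. I argue by contradiction: suppose some feasible alternative $\rho'$ under \SIGSCHEME yields strictly higher expected receiver utility. I lift $\rho'$ back to a strategy $\tilde{\rho}$ under $\SIGSCHEME[\hat]$ by composition: upon observing \Signal{j}, the receiver mentally samples a recommendation \Act{i} with probability $\hat{r}_{i,j}$ and then plays an action drawn from $\rho'(\cdot \mid \Act{i})$. A direct calculation shows that $(\SIGSCHEME[\hat], \tilde{\rho})$ and $(\SIGSCHEME, \rho')$ induce the same joint distribution on $(\STATE, \ACT)$. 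Hence $\tilde{\rho}$ is feasible under $\SIGSCHEME[\hat]$ and achieves the same receiver utility as $\rho'$, which would strictly improve on $\hat{\rho}$ and contradict its optimality.

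The main subtlety I anticipate is tie-breaking: \BESTRESPONSE may be only one of several best responses to \SIGSCHEME, and others could be randomized. Here I appeal to the standing convention, stated earlier in the paper, that the receiver breaks ties in favor of the sender, which lets us always select the deterministic \BESTRESPONSE. The ``in particular'' clause of the proposition then follows by instantiating this construction at any sender-optimal $\SIGSCHEME[\hat]$.
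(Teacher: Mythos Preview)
Your proof is correct and follows essentially the same approach as the paper's: both absorb the receiver's randomization into the sender's scheme so that the induced joint distribution over $(\STATE,\ACT)$ is unchanged, and then argue that following the resulting action-labelled signal is a best response. You collapse the paper's two-step construction (first expand to $\SignalSpace\times\ActSpace$, then compress) into a single direct definition and supply an explicit lifting argument for optimality that the paper only sketches; your treatment of the tie-breaking subtlety is likewise at least as careful as the paper's.
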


\begin{proof}
  We will give an explicit construction of such a strategy. Let \SIGSCHEME[\hat] be any signaling scheme. Let $\BESTRESPONSE[{\SIGSCHEME[\hat]}]$ be the receiver's (randomized) best response. 
  Recall that \ActProb{i}{j} is the probability with which the receiver plays \Act{i} when receiving the signal \Signal{j}. We will first construct an intermediate signaling scheme \SIGSCHEME['], and from it the final signaling scheme \SIGSCHEME.

  As a first step, the signaling scheme maps to an expanded space $\SignalSpace['] = \SignalSpace \times \ActSpace$. When observing the state \State{k}, the sender sends the signal $(\Signal{j}, \Act{i})$ with probability $ \SigProbC{k}{j} \cdot \ActProb{i}{j}$. In other words, the sender performs exactly the randomization that the receiver would perform, and makes the corresponding recommendation to the receiver. Conditioned on the signal \Signal{j}, the signal's second component \Act{i} reveals no information about the state of the world. Therefore, because the distribution of \Act{i} is exactly the distribution that \BestResponse[\hat{\SIGSCHEME}]{\Signal{j}} uses, it is a best response for the receiver (and satisfies the constraints) to deterministically\footnote{Note that it is optimal for the receiver to follow the recommendation due to the overall constraints. In isolation, the receiver may be better off deviating for some signals --- however, doing so would violate a constraint, or come at the expense of having to choose an even more suboptimal action under another signal.} follow the sender's ``recommendation'' \Act{i} when receiving the signal $(\Signal{j}, \Act{i})$.

  Then, following the standard approach for reducing the size of the signal space, we ``compress'' all signals under which the receiver chooses the same action into one signal. That is, under the final signaling scheme \SIGSCHEME, whenever the sender was going to send $(\Signal{j}, \Act{i})$ for any $j$ under \SIGSCHEME['], the sender simply sends \Act{i}. Because it is a best response for the receiver to deterministically choose \Act{i} for all received $(\Signal{j}, \Act{i})$, it is still a best response to follow the recommendation \Act{i}.

  Thus, we have constructed a signaling scheme \SIGSCHEME such that the receiver plays a deterministic best response, and the number of signals employed by the sender is at most $\SetCard{\ActSpace}$.

Finally, to prove the existence of a sender-optimal signaling scheme with deterministic receiver response, let \SIGSCHEME[\hat] be any sender-optimal signaling scheme. The existence of a signaling scheme, and thus a sender-optimal one, follows because the constraints are implementable by assumption. Then, applying the previous argument to \SIGSCHEME[\hat] gives the desired optimal signaling scheme with deterministic receiver responses.
\end{proof}

In general, most of the literature on Bayesian persuasion assumes that the signal space is at least as large as the action space (which is enough to obtain sender-optimal strategies, and find them via an LP~\cite{kolotilin:lp-optimal:2018} when EPIC holds). Hence, we make the same assumption that $\SetCard{\SignalSpace} \geq \SetCard{\ActSpace}$ in Proposition~\ref{prop:correspondingpure}. 

Henceforth, we will restrict attention to signaling schemes with deterministic best response functions \BESTRESPONSE without loss of optimality.
However, the sender still has to ensure that following the deterministic recommendation is incentive compatible for the receiver. Since the receiver is constrained, her space of deviations is only to best-response functions satisfying the constraints. This is captured by the following definition:

\begin{definition} \label{def:EAIC}
  Let $\SIGSCHEME: \StateSpace \to \SignalSpace$ be a direct signaling scheme for the sender, i.e., making action recommendations and assuming $\SignalSpace = \ActSpace$.
  Let $\Pi$ be the set of all randomized mappings $\BESTRESPONSE: \SignalSpace \to \ActSpace$ (characterized by \ActProb{i}{j}) satisfying the following inequalities for all actions \Act{j}:
  \begin{align*}
    \LB[\Act{j}] & \leq \sum_i \SigProb{i} \cdot \ActProb{i}{j}
                 \; \leq \; \UB[\Act{j}].
  \end{align*}
  Then, \SIGSCHEME is \emph{ex ante incentive compatible} iff for all feasible response functions $\BESTRESPONSE \in \Pi$,
  \begin{align*}
    \sum_i \SigProb{i} \cdot \ExpUtil{R}{\Signal{i}}{\Act{i}}
    & \geq \sum_i \sum_j \SigProb{i} \cdot \ActProb{i}{j} \cdot \ExpUtil{R}{\Signal{i}}{\Act{j}}.
  \end{align*}
\end{definition}


Note that the presence of constraints forces us to deviate from the standard EPIC requirement in the literature. Definition~\ref{def:EAIC} bears similarity to definitions in \cite{babichenko:cohen:zabarnyi:bayesian:2020,doval:skreta:constrained:2018,candogan:strack:optimal:2021}, where ex ante constraints are considered.

\vspace{-6pt}
\section{Our Main Result} \label{sec:biresult}

In this section, we present the main result of this paper.

\begin{theorem} \label{thm:main}
  Consider a Bayesian persuasion setting in which the state and action spaces are binary. The receiver is state-matching, and the sender is action-matching.
  Let \Constr{\LB}{\UB} and \Constr{\LB[']}{\UB[']} be two feasible constraints such that \Constr{\LB}{\UB} is more binding than \Constr{\LB[']}{\UB[']}, and let \SetSIGSCHEME, \SetSIGSCHEME['] be the set of all sender-optimal signaling schemes under these constraints, respectively.

Let $\SIGSCHEME \in \argmax_{\SIGSCHEME\in \SetSIGSCHEME}\Expect[{\STATE \sim \PRIOR, \SIGNAL \sim \SigScheme{\STATE}, \ACT \sim \BestResponse[\SIGSCHEME]{\SIGNAL}}]{\Util{R}{\STATE}{\ACT}}$ maximize the receiver's utility over \SetSIGSCHEME,
    and $\SIGSCHEME['] \in \argmax_{\SIGSCHEME[']\in\SetSIGSCHEME[']}\Expect[{\STATE \sim \PRIOR, \SIGNAL \sim \SigScheme[']{\STATE}, \ACT \sim \BestResponse[\SIGSCHEME']{\SIGNAL}}]{\Util{R}{\STATE}{\ACT}}$ maximize the receiver's utility over \SetSIGSCHEME['].
  Then the receiver is no worse off under \SIGSCHEME than under \SIGSCHEME['], i.e., $\Expect[{\STATE \sim \PRIOR, \SIGNAL \sim \SigScheme{\STATE}, \ACT \sim \BestResponse[\SIGSCHEME]{\SIGNAL}}]{\Util{R}{\STATE}{\ACT}} \geq \Expect[{\STATE \sim \PRIOR, \SIGNAL \sim \SigScheme[']{\STATE}, \ACT \sim \BestResponse[\SIGSCHEME']{\SIGNAL}}]{\Util{R}{\STATE}{\ACT}}.$ 
\end{theorem}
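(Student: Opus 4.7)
My plan is to reduce the two-dimensional design problem for $\SIGSCHEME$ to a one-dimensional optimization over the total probability $\tau$ with which the sender recommends $\Act{1}$, then argue that more binding constraints force the sender's optimal $\tau$ closer to $\PriorProb$, which is exactly where the receiver's utility is maximized. By Proposition~\ref{prop:correspondingpure}, it suffices to consider direct signaling schemes $\SIGSCHEME: \StateSpace \to \ActSpace$ with deterministic receiver responses, parameterized by $q_i := \SigProbC{i}{1}$ for $i=1,2$. Reparameterizing via $\tau := \PriorProb\, q_1 + (1-\PriorProb)\, q_2$ and $x := \PriorProb\, q_1$, the expected utilities become linear:
\[
  U_S \;=\; C_S + x(\delta_1+\delta_2) - \tau \delta_2, \qquad
  U_R \;=\; C_R + x(\Delta_1+\Delta_2) - \tau \Delta_2,
\]
where $\Delta_1, \Delta_2 \geq 0$ are the state-matching receiver utility gaps and $\delta_1, \delta_2$ the analogous sender gaps; action-matching is exactly the statement $\delta_1 + \delta_2 \geq 0$ (without constraining their individual signs). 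The key observation is that for any fixed $\tau$, both $U_S$ and $U_R$ are nondecreasing in $x$, so sender and receiver are aligned in preferring the largest feasible $x$, namely $x = \min(\PriorProb, \tau)$.

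Next I would analyze the EAIC constraint (Definition~\ref{def:EAIC}) at this choice of $x$. Enumerating feasible receiver deviations --- a 2D LP --- reduces EAIC to three inequalities in terms of the posterior expected gains $\gamma_j$ from choosing $\Act{1}$ over $\Act{2}$ at $\Signal{j}$: $\gamma_1 \geq \gamma_2$ always, plus $\gamma_1 \geq 0$ whenever $\tau > \LB$ and $\gamma_2 \leq 0$ whenever $\tau < \UB$. Translated back into $(x, \tau)$ space, the set of $\tau$ for which $x = \min(\PriorProb, \tau)$ is EAIC-feasible is essentially the interval $[\max(\LB, \tau^\dagger), \min(\UB, \tau^\ddagger)]$, for thresholds $\tau^\dagger \leq \PriorProb \leq \tau^\ddagger$ determined by the receiver's indifference posterior $p^\ast = \Delta_2/(\Delta_1+\Delta_2)$. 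Feasibility of the constraints together with these inequalities implies that this interval always contains $\PriorProb$ and shrinks monotonically as $[\LB, \UB]$ shrinks.

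Substituting $x = \min(\PriorProb, \tau)$ yields one-dimensional utilities $g_S(\tau), g_R(\tau)$, each piecewise linear with breakpoint at $\PriorProb$: $g_R$ has slopes $\Delta_1 \geq 0$ on $[0,\PriorProb]$ and $-\Delta_2 \leq 0$ on $[\PriorProb, 1]$, so it is unimodal with peak at $\PriorProb$; $g_S$ has slopes $\delta_1$ and $-\delta_2$. A case analysis on the signs of $(\delta_1,\delta_2)$ completes the proof (action-matching excludes both being negative). If both $\delta_i \geq 0$, then $g_S$ also peaks at $\PriorProb$, so $\tau^\ast = \PriorProb$ is constraint-independent and the receiver attains her maximum. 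If $\delta_1 \geq 0, \delta_2 < 0$, then $g_S$ is strictly increasing, so $\tau^\ast = \min(\UB, \tau^\ddagger) \geq \PriorProb$, and tightening $\UB$ pushes $\tau^\ast$ down toward $\PriorProb$, weakly improving $g_R$ by its monotonicity on $[\PriorProb, \tau^\ddagger]$; the case $\delta_1 < 0, \delta_2 \geq 0$ is symmetric and driven by $\LB$. The main obstacle is the EAIC characterization itself: sufficiency of the three inequalities requires carefully handling joint deviations that simultaneously change both receiver probabilities $\beta_1, \beta_2$ while rebalancing to stay within $[\LB, \UB]$, and the boundary cases $\tau \in \{\LB, \UB\}$ --- which relax one of the conditions --- require separate treatment to ensure that the feasible interval for $\tau$ indeed has the clean monotonicity in $[\LB, \UB]$ that drives the comparison.
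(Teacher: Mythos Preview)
Your approach and the paper's are ultimately the same case analysis on the signs of $(\delta_1,\delta_2)$, but you have repackaged it more geometrically: the paper argues case by case directly on the conditional probabilities $\SigProbC{i}{j}$, first showing by a swap argument that $\SigProbC{1}{2}=0$ (your ``$x=\min(\PriorProb,\tau)$'') and then that the sender pushes $\SigProbC{2}{1}$ to the tighter of the obedience bound and the action bound; you instead pass through the $(x,\tau)$ parameterization and the one-dimensional reductions $g_S,g_R$, which makes the structural reason --- $g_R$ is unimodal with peak at $\PriorProb$, so anything that drags the sender's optimal $\tau^\ast$ toward $\PriorProb$ helps the receiver --- completely transparent. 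That viewpoint is a genuine expository gain, and it also shows that you are really using only the single inequality $\delta_1+\delta_2\geq 0$, which is strictly weaker than action-matching (your parenthetical ``action-matching is exactly $\delta_1+\delta_2\geq 0$'' is not correct: action-matching is the pair of conditions $\Util{S}{\State{1}}{\Act{1}}\geq\Util{S}{\State{2}}{\Act{1}}$ and $\Util{S}{\State{2}}{\Act{2}}\geq\Util{S}{\State{1}}{\Act{2}}$, which together imply but are not equivalent to your sum condition).

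On the obstacle you flag: the paper does not attempt a full EAIC characterization either. It instead observes that, in every case, the sender-optimal scheme one constructs (with $\SigProbC{1}{2}=0$ and $\SigProbC{2}{1}$ at the tighter bound) is in fact \emph{ex post} incentive compatible for the state-matching receiver, which is strictly stronger than EAIC; hence there is nothing to check beyond the usual obedience inequality on $\Signal{1}$. You can close your gap the same way: once $x=\min(\PriorProb,\tau)$, one of the two signals is pure and the other satisfies the standard obedience inequality exactly on $[\tau^\dagger,\tau^\ddagger]$, so EPIC (and hence EAIC) holds there, and no $\tau$ outside that interval can be sender-optimal because any EAIC-feasible scheme at such a $\tau$ has strictly smaller $x$ and therefore weakly smaller $U_S$ than the boundary value $g_S(\tau^\dagger)$ or $g_S(\tau^\ddagger)$. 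With that observation your argument is complete and matches the paper's.
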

\begin{proof}
  At a high level, the intuition for the proof is as follows. Based on the discussion in Section~\ref{sec:constrained-receiver}, the constraints on the receiver actually translate into constraints on the sender in the optimization problem. Because the sender's signaling schemes are more constrained, he has to reveal more information. However, this intuition is not complete --- after all, the constraints may entice the sender to reveal \emph{less} information. Furthermore, as we see in Section~\ref{sec:general}, when the state space is not binary, a more constrained receiver may be worse off.

  Let \SIGSCHEME, \SIGSCHEME['] be 
  as defined in the statement of the theorem, and let \SigProbC{i}{j}, \SigProbC[']{i}{j} be their corresponding conditional probabilities of sending the signal \Signal{j} in state \State{i}.
By Proposition~\ref{prop:correspondingpure}, w.l.o.g., under the sender-optimal strategies \SIGSCHEME, \SIGSCHEME['], the sender recommends an action to the receiver, and the receiver deterministically follows the recommendation. That is, the signal \Signal{i} can be associated with the action \Act{i} for $i=1,2$.
Our proof is based on distinguishing four cases, depending on the sender's utility:

\begin{enumerate}
\item $\Util{S}{\State{1}}{\Act{1}} \geq \Util{S}{\State{1}}{\Act{2}}$
  and $\Util{S}{\State{2}}{\Act{2}} \geq \Util{S}{\State{2}}{\Act{1}}$

In this case, for every state, the sender prefers the same action as the receiver. Since the sender's and the receiver's preferences are fully aligned, the sender's optimal strategy is to fully reveal the state of the world. Since the constraints are feasible, the receiver can perfectly match the state of the world under both constraints, and hence obtains the same utility under both constraints.

\item $\Util{S}{\State{1}}{\Act{1}} \geq \Util{S}{\State{1}}{\Act{2}}$
  and $\Util{S}{\State{2}}{\Act{2}} \leq \Util{S}{\State{2}}{\Act{1}}$

  In this case, the sender always prefers action \Act{1}.
  Since the sender is action-matching, $\Util{S}{\State{1}}{\Act{1}} \geq \Util{S}{\State{2}}{\Act{1}}$ and $\Util{S}{\State{2}}{\Act{2}} \geq \Util{S}{\State{1}}{\Act{2}}$. Combining these inequalities, we obtain that the sender's utility function satisfies the following total order:
\begin{align*}
  \Util{S}{\State{1}}{\Act{1}}
  \geq \Util{S}{\State{2}}{\Act{1}}
  \geq \Util{S}{\State{2}}{\Act{2}}
  \geq \Util{S}{\State{1}}{\Act{2}}.
\end{align*}

This implies that
\begin{align}
  \Util{S}{\State{1}}{\Act{1}} - \Util{S}{\State{1}}{\Act{2}}
  & \geq \Util{S}{\State{2}}{\Act{1}}- \Util{S}{\State{2}}{\Act{2}}.
    \label{eqn:sender-swap}
\end{align}

We now show that $\SigProbC{1}{2} = 0$.
An identical proof also shows that $\SigProbC[']{1}{2} = 0$.
We distinguish two cases:

\begin{enumerate}
  \item If $\SigProbC{1}{2} > 0$ and $\SigProbC{2}{1} > 0$, then the sender could move some probability mass $\epsilon > 0$ from recommending \Act{2} under \State{1} to recommending \Act{1}, and in return move the same amount from recommending \Act{1} under \State{2} to recommending \Act{2}.
  Because the receiver is state-matching, she will still follow the sender's recommendation, and the total probability with which each action is played stays unchanged, so the strategy is still feasible. By Equation~\eqref{eqn:sender-swap}, the sender's utility (weakly) increases. By choosing $\epsilon$ as large as possible, we arrive at the claim or at the following case.
\item If $\SigProbC{1}{2} > 0$ and $\SigProbC{2}{1} = 0$, then $\SigProb{\Act{1}} = \PriorProb \cdot \SigProbC{1}{1} < \PriorProb \leq \UB$. Therefore, it is feasible for the sender to always send the signal \Signal{1} when the state is \State{1}(i.e., decrease \SigProbC{1}{2} to 0 and increase \SigProbC{1}{1} by the same amount). Again, because the receiver is state-matching, she will still follow the sender's recommendation, and the sender is weakly better off because $\Util{S}{\State{1}}{\Act{1}} \geq \Util{S}{\State{1}}{\Act{2}}$.
\end{enumerate}

Because $\Util{S}{\State{2}}{\Act{1}} \geq \Util{S}{\State{2}}{\Act{2}}$ and $\SigProbC{1}{1} = 1$ (as proved above), the sender will also send \Signal{1} with as much probability as possible when the state is \State{2}, subject to not violating the receiver's incentive to play \Act{1} and not exceeding the upper bound \UB (or \UB[']). In other words, the sender maximizes \SigProbC{1}{2} subject to $\Expect[{\STATE \sim \PRIOR, \SIGNAL \sim \SigScheme{\STATE}}]{\Util{R}{\STATE}{\Act{1}}|\Signal{1}} \geq \Expect[{\STATE \sim \PRIOR, \SIGNAL \sim \SigScheme{\STATE}}]{\Util{R}{\STATE}{\Act{2}}|\Signal{1}}$ 
and $\UB\geq \SigProb{\Act{1}}$ (or $\UB[']\geq \SigProb{\Act{1}}$).
Using $\SigProbC{1}{1} = 1$, the incentive constraint is equivalently expressed as $\SigProbC{2}{1} \leq \frac{\PriorProb \cdot \left(\Util{R}{\State{1}}{\Act{1}}-\Util{R}{\State{1}}{\Act{2}}\right)}{(1-\PriorProb) \cdot \left(\Util{R}{\State{2}}{\Act{2}}-\Util{R}{\State{2}}{\Act{1}}\right)}$.
Since this inequality is independent of the bound and \UB['] is more restricted than \UB, the receiver is weakly better off under the constraint \UB than under \UB['].

\item $\Util{S}{\State{1}}{\Act{1}} \leq \Util{S}{\State{1}}{\Act{2}}$
  and $\Util{S}{\State{2}}{\Act{2}} \geq \Util{S}{\State{2}}{\Act{1}}$

This case is symmetric to the previous one. Here, the roles of \Act{1} and \Act{2} (and \State{1} and \State{2}) are reversed, and the important constraint becomes the lower bound \LB (and \LB[']) rather than the upper bound \UB.

\item $\Util{S}{\State{1}}{\Act{1}} \leq \Util{S}{\State{1}}{\Act{2}}$
  and $\Util{S}{\State{2}}{\Act{2}} \leq \Util{S}{\State{2}}{\Act{1}}$

  In this case, the fact that the sender is action-matching together with the assumed inequalities implies that
  \begin{align*}
  \Util{S}{\State{2}}{\Act{2}}
  \stackrel{\text{AM}}{\geq} \Util{S}{\State{1}}{\Act{2}}
  \geq \Util{S}{\State{1}}{\Act{1}}
  \stackrel{\text{AM}}{\geq} \Util{S}{\State{2}}{\Act{1}}
  \geq \Util{S}{\State{2}}{\Act{2}}.
\end{align*}

Thus, the sender's utility is the same, regardless of the state and action.
As a result, the sender is indifferent between all signaling schemes.
In particular, fully revealing the state is an optimal strategy for the sender for any constraint; clearly, this would be best for the receiver.
\end{enumerate}

Thus, for all four cases, the receiver will be no worse off under the more binding constraint.
\end{proof}

\vspace{-6pt}
\subsection{Necessity of Partial Alignment} \label{sec:alignment}
Our main Theorem~\ref{thm:main} assumes that the sender is partially aligned with the receiver (in addition to the state space being binary). One may ask whether the partial alignment is necessary, or whether a more constrained receiver is \emph{always} better off with binary state and action spaces. Here, we show that the assumption is necessary, by giving a $2 \times 2$ example under which the receiver is worse off when more constrained.

The sender's and receiver's utility functions are given in Table~\ref{tb:appendutil}. Here, $0 < \epsilon \ll 1$. The prior distribution over states is $\PriorProb=\frac{1}{4}$.

\begin{table}[ht] 
\vspace{-12pt}
     \caption{Sender's and Receiver's Utility in the example without partial alignment}
   \label{tb:appendutil}
   \vspace{6pt}
  \begin{subtable}[t]{0.48\textwidth}
  \centering
  \begin{tabular}{|c|c|c|}
    \hline
     & $\State{1}$  & $\State{2}$  \\ \hline
    $\Act{1}$ & 2 & 3  \\ \hline
    $\Act{2}$ & 1 & 0   \\ \hline
    \end{tabular}
  \caption{Sender's Utility}
  \end{subtable}
  \begin{subtable}[t]{0.48\textwidth}
  \centering
  \begin{tabular}{|c|c|c|}
    \hline
     & $\State{1}$  & $\State{2}$  \\ \hline
    $\Act{1}$ & 1 & $\epsilon$  \\ \hline
    $\Act{2}$ & 0 & 1   \\ \hline
    \end{tabular}
  \caption{Receiver's Utility} 
  \end{subtable}
  \vspace{-12pt}
\end{table}

The sender prefers action $\Act{1}$ in both states, and the receiver is state-matching. Notice that the sender is not action-matching: when the receiver plays \Act{1}, the sender prefers the state \State{2} over \State{1}.
We write $\Signal{i}$ for the sender's signal suggesting action $\Act{i}$, $i\in \{1,2\}$.

We will compare the receiver's expected utilities in the following two settings:

\begin{enumerate}
    \item There are (effectively) no constraints, i.e., $\UB[\Act{1}]=1,\LB[\Act{1}]=0,\UB[\Act{2}]=1,\LB[\Act{2}]=0$.
    \item The constraint profile binds the sender-preferred action to at most its prior probability, i.e., $\UB[\Act{1}]'=\frac{1}{4},\LB[\Act{1}]'=0,\UB[\Act{2}]'=1,\LB[\Act{2}]'=0$.
\end{enumerate}

The first setting is the classical 
Bayesian persuasion problem: the sender's optimal signaling strategy can be obtained by the concavification approach presented in \cite{kamenica:gentzkow:bayesian}, and is the following:
Send \Signal{1} with $\SigProbC{1}{1}=1$ and $\SigProbC{2}{1}=\third$; send \Signal{2} with $\SigProbC{1}{2}=0$ and $\SigProbC{2}{2}=\frac{2}{3}$.
Given this commitment, the receiver's expected utility is $\frac{1+\epsilon}{2}$ when receiving \Signal{1} (because \State{1} and \State{2} are equally likely to occur), and her expected utility is $1$ when receiving \Signal{2}. Thus, the receiver's overall expected utility is $\frac{3+\epsilon}{4}$.

In the second setting, the sender cannot send the signal \Signal{1} as frequently as in the unconstrained case. When the sender is forced to reduce \Prob{\Signal{1}}, he prefers to reduce the probability \SigProbC{1}{1} instead of \SigProbC{1}{2}. This is because
$\Util{S}{\State{2}}{\Act{1}} - \Util{S}{\State{2}}{\Act{2}}
> \Util{S}{\State{1}}{\Act{1}} - \Util{S}{\State{1}}{\Act{2}}$.
However, reducing \SigProbC{1}{1} solely may cause the signal \Signal{1} to not be persuasive any more, when the posterior belief violates the incentive constraint.
Hence, the sender's optimal signaling strategy requires him to maximize the total probability of \Signal{1}, under the constraint that the receiver is still willing to take action $\Act{1}$ under $\Signal{1}$. Thus, the sender's optimal signaling scheme is the following:
Send \Signal{1} with $\SigProbC{1}{1}=\half$ and $\SigProbC{2}{1}=\frac{1}{6}$; send \Signal{2} with $\SigProbC{1}{2}=\half$ and  $\SigProbC{2}{2}=\frac{5}{6}$.

Against this signaling scheme, the receiver's best response to \Signal{1} is taking action \Act{1}, with an expected utility of $\frac{1+\epsilon}{2}$. Her best response to \Signal{2} is taking action \Act{2}, with an expected utility of $\frac{5}{6}$. Hence, the receiver's expected utility 
is $\frac{6+\epsilon}{8}$ under the constraints \Constr{\LB[']}{\UB[']}.

In summary, the receiver's expected utility of $\frac{3+\epsilon}{4}$ in the first setting is higher than her utility of $\frac{6+\epsilon}{8}$ in the second setting. Thus, we have exhibited an example where a more constrained receiver is worse off than a less constrained one.

\vspace{-6pt}
\section{Failure of the Main Result with larger State Spaces} \label{sec:general}

Unfortunately, contrary to the case of binary state and action spaces, when the state and action spaces are larger, a state-matching receiver and action-matching sender (and feasible constraints) are not enough to ensure that the receiver is always better off when more constrained. Consider the utilities given in Table~\ref{tab:ternary-counterexample}. There are three states in the world, and correspondingly three actions. The prior over the states is uniform.


\begin{table}[ht]
\vspace{-18pt}
  \caption{An example where a constrained receiver is worse off\label{tab:ternary-counterexample}}
  \vspace{6pt}
  \begin{subtable}[t]{0.32\textwidth}
  \centering
  \begin{tabular}{|c|c|c|c|}
    \hline
     & $\State{1}$  & $\State{2}$  & $\State{3}$  \\ \hline
    $\Act{1}$ & 10 & 10 & 0 \\ \hline
    $\Act{2}$ & 0  & 2  & 2 \\ \hline
    $\Act{3}$ & 0  & 0  & 1 \\ \hline
    \end{tabular}
  \caption{Sender's Utility}
  \end{subtable}
  \begin{subtable}[t]{0.32\textwidth}
  \centering
  \begin{tabular}{|c|c|c|c|}
    \hline
     & $\State{1}$  & $\State{2}$  & $\State{3}$  \\ \hline
    $\Act{1}$ & 4 & 2 & 0 \\ \hline
    $\Act{2}$ & 0 & 3 & 1 \\ \hline
    $\Act{3}$ & 0 & 1 & 3 \\ \hline
  \end{tabular}
  \caption{Receiver's Utility}
  \end{subtable}
  \begin{subtable}[t]{0.32\textwidth}
  \centering
   \begin{tabular}{|c|c|c|c|}
 \hline
  & $\State{1}$ & $\State{2}$ & $\State{3}$ \\ \hline
 $\Signal{1}$ & 1 & \half & 0     \\ \hline
 $\Signal{2}$ & 0 & \half & \half \\ \hline
 $\Signal{3}$ & 0 & 0     & \half \\ \hline
 \end{tabular}
\caption{Sender-optimal signaling scheme when $\UB[\Act{1}] = \half$} \label{tab:optimal-constrained-receiver}
 \end{subtable}
  \vspace{-18pt}
\end{table}

Notice that the receiver is state-matching, and the sender is action-matching.

\paragraph{Unconstrained Receiver.}
First, consider an unconstrained receiver.
The sender's optimal signaling scheme \SIGSCHEME is to recommend action \Act{1} whenever the state of the world is \State{1} or \State{2}, and recommend action \Act{3} otherwise.

To verify that the receiver follows the recommendation, one simply compares the utility from the alternative actions: when the sender recommends \Act{1}, following the recommendation gives the receiver expected utility $\half \cdot 4 + \half \cdot 2 = 3$, while \Act{2} would give utility $\half \cdot 0 + \half \cdot 3 = \frac{3}{2}$, and \Act{3} would give $\half \cdot 0 + \half \cdot = \half$. For the recommendation of \Act{3}, the receiver gets to match the state deterministically, so following the recommendation is optimal. Because the signaling scheme is even ex post incentive compatible for the receiver, it is most definitely ex ante incentive compatible.

To see that this signaling scheme is optimal for the sender, first observe that for states \State{1} and \State{2}, the sender obtains the maximum possible utility of 10 over all actions. For state \State{3}, the sender would prefer the receiver to play action \Act{2}. However, the only way to get the sender to play \Act{2} is to mix at least one unit of probability of \State{2} per unit of probability of \State{3}. While this increases the sender's utility for the unit of probability from \State{3} from 1 to 2, it decreases his utility for the unit of probability from \State{2} from 10 (since the receiver played \Act{1}) to 2. Thus, the given signaling scheme is sender-optimal.



Under this signaling scheme, the receiver's expected utility can be calculated as
$\frac{2}{3} \cdot (\half \cdot 4 + \half \cdot 2) + \third \cdot  3 = 3$.

\paragraph{Adding a non-trivial constraint.}
Now, consider a receiver constrained by an upper bound $\UB[\Act{1}]=\half$.
Table~\ref{tab:optimal-constrained-receiver} shows the sender-optimal signaling scheme. Here, the entries show the conditional probability \SigProbC{i}{j} of recommending action \Act{j} (i.e., sending signal \Signal{j}) when the state is \State{i}.


First, notice that action \Act{1} is recommended with probability \half, so the constraint is satisfied.
Second, the receiver will follow the sender's recommendation, as can be checked by comparing her utility from each of the three actions conditioned on any signal. (In the case of receiving \Signal{2}, she is indifferent between \Act{2} and \Act{3} --- recall that we assume tie breaking in favor of the sender.) Again, the given signaling scheme is even ex post incentive compatible, so in particular, it is also ex ante incentive compatible.

To see that the signaling scheme is optimal for the sender, first notice that he induces action \Act{1} (under states \State{1} or \State{2}) with the maximum probability of \half.
Also, notice that using all of the probability from \State{1} to induce \Act{1} is optimal for the sender, because under \State{1}, if any action other than \Act{1} is played, the sender's utility is 0.
Because $\frac{1}{6}$ unit of probability from \State{2} yields a recommendation of \Act{1}, at most $\frac{1}{6}$ can yield a recommendation of \Act{2}, which gives the next-highest utility for the sender.
And because the receiver will choose \Act{2} only when the conditional probability of \State{2} is at least as large as that of \State{3}, action \Act{2} is induced with the maximum possible probability of \third.
Inducing any other actions for any of the states would yield the sender utility 0. Hence, the given signaling scheme is optimal for the sender.


Under this signaling scheme, the receiver's expected utility is
$\half \cdot (\frac{2}{3} \cdot 4 + \third \cdot 2)
  + \third (\half \cdot 3 + \half \cdot 1)
  + \frac{1}{6} \cdot 3
= \frac{17}{6}$.

Thus, the constrained receiver's utility of $\frac{17}{6}$ is lower than the unconstrained receiver's of $3$.

\vspace{-6pt}
\section{Discussion} \label{sec:discuss}
\vspace{-6pt}
We showed that a state-matching receiver, facing an action-matching sender under a binary state space, obtains weakly higher utility when more constrained. We believe that such behavior is in fact observed in the real world: for example, recommenders tend to be more careful in whom they nominate for particularly selective awards or positions.

\vspace{-6pt}
\subsection{Larger State/Action Spaces}
As we discussed in Section~\ref{sec:general}, our results do not carry over to larger state spaces. Indeed, even for state spaces with three states, in which the receiver tries to minimize the distance between the action and the state of the world, there are counter-examples under which a constrained receiver is worse off.

While the result does not hold in full generality with three (or more) states, by imposing additional conditions, a positive result can be recovered:

\begin{proposition} \label{thm:ternary}
  Assume that the state space has size $\SetCard{\StateSpace} = 3$, and that the receiver is state-matching and the sender is action-matching.
  In addition, assume that the following two conditions are satisfied.

\begin{enumerate}
    \item The sender has a monotone\footnote{The result holds symmetrically if the order is reversed.} preference over actions across all states, i.e., $\Util{S}{\State{i}}{\Act{1}} \geq \Util{S}{\State{i}}{\Act{2}} \geq \Util{S}{\State{i}}{\Act{3}}$ for all $i$.
    \item For every state $i$, the receiver is worse off choosing an action $j < i$ that is too low compared to choosing an action $k > i$ that is too high\footnote{Notice that in the case $\SetCard{\StateSpace} = 3$, this constraint only applies to $i=2, j=1, k=3$. We phrase it more generally to set the stage for a further generalization below.}: that is, $\Util{R}{\State{i}}{\Act{j}} \leq \Util{R}{\State{i}}{\Act{k}}$ for all $j < i < k$.
\end{enumerate}

\noindent Then, a more constrained receiver is never worse off than a less constrained one.
\end{proposition}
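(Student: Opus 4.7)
The plan is to mirror Case~2 in the proof of Theorem~\ref{thm:main}, adding work to handle three states and actions and the interacting incentive constraints they induce. First, we would invoke Proposition~\ref{prop:correspondingpure} to restrict attention to direct signaling schemes with a deterministic receiver response, so that $\SigProbC{i}{j}$ denotes the probability of recommending $\Act{j}$ in state $\State{i}$. Because condition~1 makes $\Act{1}$ the sender's most preferred action in every state, the binding restrictions on the sender's problem are $\UB[\Act{1}]$, $\LB[\Act{3}]$, and the receiver's obedience constraints on $\Signal{1}$ against deviations to $\Act{2}$ and $\Act{3}$; tightening those should therefore force more revelation.

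Next, we would establish the structural backbone via swap arguments analogous to Case~2 of Theorem~\ref{thm:main}. Combining condition~1 with action-matching yields supermodularity-type inequalities such as $\Util{S}{\State{1}}{\Act{1}} - \Util{S}{\State{1}}{\Act{2}} \geq \Util{S}{\State{2}}{\Act{1}} - \Util{S}{\State{2}}{\Act{2}}$, and the analogous inequality for the pair $(\State{1},\State{3})$ via $(\Act{1},\Act{3})$; each such inequality justifies swapping probability mass between the indicated state/action pairs in a way that preserves every constraint bound and the total probability on each action. Iterating whichever swap applies to a nonzero off-diagonal coordinate shows that in some receiver-preferred sender-optimal scheme one can take $\SigProbC{1}{1}=1$ whenever $\UB[\Act{1}]\geq\Prob[\PRIOR]{\STATE=\State{1}}$; in the complementary degenerate case $\Signal{1}$ can be made to reveal $\State{1}$ exactly, and the proposition follows at once. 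Symmetric swaps further collapse the allocation of $\State{2},\State{3}$ between $\Act{2}$ and $\Act{3}$, leaving the sender's only substantive choices to be how much of $\State{2}$ and of $\State{3}$ to pool into $\Signal{1}$.

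With that reduction, we would establish monotonicity in the constraints. Condition~2 is essential here: because $\Util{R}{\State{2}}{\Act{3}}\geq\Util{R}{\State{2}}{\Act{1}}$, the receiver's deviation from obeying $\Signal{1}$ to playing $\Act{3}$ is a non-slack threat as soon as $\State{2}$ is pooled into $\Signal{1}$, so that IC binds and ties the sender's hand just as the single binary IC did in the proof of Theorem~\ref{thm:main}. As $\UB[\Act{1}]$ tightens, either this IC remains the binding constraint (pinning the receiver's conditional utility on $\Signal{1}$ to a value that improves as the posterior concentrates on $\State{1}$) or the upper bound itself binds (again concentrating the $\Signal{1}$ posterior on $\State{1}$); in both subcases the mass removed from $\Signal{1}$ migrates to $\Signal{2}$ or $\Signal{3}$ with strictly more state-matched posteriors, weakly improving the receiver's utility on those signals. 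Symmetric reasoning handles tightening of the remaining bounds.

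The hard part will be the swap arguments of the second step: unlike the binary case, a one-state swap that strictly improves the sender's objective may violate one of the six constraint bounds or flip the receiver's IC on another signal, so one must chain the diagonal supermodularity inequalities into multi-step swaps whenever the ``obvious'' swap fails (for instance, $(\State{1},\State{3})$ via $(\Act{1},\Act{2})$ does not follow from action-matching alone and must be routed through $\State{2}$). Condition~2 is precisely what makes the receiver's dangerous deviation from $\Signal{1}$ be to the skip-neighboring action $\Act{3}$ rather than to the adjacent action $\Act{2}$---exactly the property that fails in the counterexample of Section~\ref{sec:general}, and without which the sender could pool $\State{2}$ into $\Signal{1}$ at no real cost to the IC, foiling monotonicity.
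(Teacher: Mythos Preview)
The paper does not actually include a proof of Proposition~\ref{thm:ternary}; it says only that ``the proof involves a long and tedious case distinction, and we therefore do not include it in the paper.'' So there is no reference proof to match against --- only the hint that the authors' argument proceeds by exhaustive case analysis rather than by a single structural sweep. Your plan, by contrast, tries to push the swap/supermodularity machinery of Case~2 of Theorem~\ref{thm:main} as far as it will go and then finish with a monotonicity argument. That is a more ambitious and cleaner program than what the paper hints at, and your diagnosis of why condition~2 is needed (it is precisely what fails in the Section~\ref{sec:general} counterexample, where $\Util{R}{\State{2}}{\Act{1}}=2>1=\Util{R}{\State{2}}{\Act{3}}$) is on target.

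That said, your third step has a real gap. You assert that when $\UB[\Act{1}]$ tightens, ``the mass removed from $\Signal{1}$ migrates to $\Signal{2}$ or $\Signal{3}$ with strictly more state-matched posteriors, weakly improving the receiver's utility on those signals.'' But the sender, not the receiver, decides where displaced mass goes, and by condition~1 the sender strictly prefers $\Act{2}$ to $\Act{3}$. So mass from $\State{3}$ that was previously pooled into $\Signal{1}$ will, if at all possible, be rerouted to $\Signal{2}$ rather than $\Signal{3}$, which can \emph{dilute} the posterior on $\Signal{2}$ (more $\State{3}$, same $\State{2}$) and lower the receiver's conditional utility there. You need an argument that either (i) $\State{3}$ mass is never in $\Signal{1}$ to begin with (your step~2 does not claim this --- you explicitly leave both $\State{2}$ and $\State{3}$ as candidates for pooling into $\Signal{1}$), or (ii) the IC on $\Signal{2}$ caps how much $\State{3}$ can be absorbed there, and the residual effect on the receiver is still nonnegative once you aggregate across all three signals. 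Neither of these is automatic, and the second in particular is exactly where one would expect the ``long and tedious case distinction'' the paper alludes to: the binding constraint on $\Signal{2}$ can be either its IC against $\Act{3}$, its IC against $\Act{1}$, the bound $\UB[\Act{2}]$, or the bound $\LB[\Act{2}]$, and the receiver-utility comparison looks different in each subcase. Similarly, your treatment of how tightening $\LB[\Act{3}]$ or $\UB[\Act{2}]$ interacts with the $\Signal{1}$ pool is entirely deferred to ``symmetric reasoning,'' but the problem is not symmetric once condition~1 breaks the symmetry between $\Act{1}$ and $\Act{3}$.

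In short: steps~1 and~2 are a sound reduction, and your supermodularity observations are correct (including the caveat that the $(\State{1},\State{3})$-via-$(\Act{1},\Act{2})$ swap must be routed through $\State{2}$). But step~3 as written is a hope, not an argument; to turn it into one you will almost certainly have to enumerate which constraint binds on each signal and check the receiver-utility inequality case by case, which is presumably why the authors describe their own proof the way they do.
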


The additional assumptions on the sender side capture a stronger version of the utility relationship of the interesting cases in the proof of Theorem~\ref{thm:main}. They are motivated in many of our cases: for instance, a letter writer may want to obtain the highest possible honor (or salary) for a student, or a prosecutor may want to maximize the sentence of a defendant.

The additional assumption on the receiver side would capture a cautious department or judge, who would prefer to err on the side of not inviting weak candidates (or giving awards to undeserving candidates), or giving the defendant a sentence that is too low rather than ever giving too high of a sentence.

While Proposition~\ref{thm:ternary} shows that with enough assumptions, a positive result can be recovered, we believe that the assumptions are still rather restrictive, meaning that the proposition is likely of limited interest. The proof involves a long and tedious case distinction, and we therefore do not include it in the paper.



For fully general state spaces (i.e., $n = \SetCard{\StateSpace} \geq 3$), we can currently obtain a positive result only by imposing even more assumptions on the utility functions. In addition to the (generalization of) the assumptions from Proposition~\ref{thm:ternary}, we can make the following assumptions:
(1) Whenever $j < i$, the sender's utility difference between actions $j < j'$ is larger under state \State{i} than under state \State{i'} for $i' > i$. In other words, when the state of the world is smaller, the sender is more sensitive to changes in the receiver's action.
(2) For any fixed state \State{i}, the receiver's utility as a function of $j$ (the action) is increasing and \emph{convex} for $j \leq i$, and decreasing and convex for $j \geq i$.
By adding these two assumptions, we can again obtain a result that a constrained receiver is always weakly better off than an unconstrained one.
While it is possible to construct reasonably natural applications which satisfy these conditions, the conditions are far from covering a broad class of Bayesian persuasion settings.
For this reason, we are not including a proof of this result, instead considering the discussion as a point of departure towards identifying less stringent assumptions that may enable positive results.

Whether there is a broad and natural class of Bayesian persuasion instances with more than two states of the world in which the insight ``A more constrained receiver is better off'' from Theorem~\ref{thm:main} carries over is an interesting direction for future research.

\vspace{-6pt}
\subsection{Finding optimal signaling schemes}
While the main focus of our work is on the receiver's utility when more constrained, our model also raises an interesting computational question, as briefly discussed in Section~\ref{sec:constrained-receiver}.
In particular, we do not know whether there is a polynomial-time algorithm which --- given the sender's and receiver's utility functions as well as the constraints on the receiver --- finds a sender-optimal signaling scheme. Since probability constraints on receivers (quotas) are quite natural in many signaling settings, this constitutes an interesting direction for future work.

The main difficulty in applying standard techniques is that the constraints may force the receiver to play an ex post suboptimal action. The standard LP for the sender's optimization problem \cite{dughmi:xu:algorithmic:2019}
maximizes the sender's expected utility subject to the constraint that the receiver is incentivized to play the sender's recommended action.
To appreciate the difference, consider a setting in which the state of the world is uniform over $\SET{\State{1}, \State{2}}$, and the sender and receiver both obtain utility 1 if the receiver plays action \Act{1}, and 0 otherwise.
Without any constraints, the sender need not send any signal, and the receiver would simply play action \Act{1}.
But if the receiver is constrained to playing action \Act{1} with probability exactly $\half$, then she must randomize, including the (always suboptimal) action \Act{2} with probability \half.
By Proposition~\ref{prop:correspondingpure}, the randomization can be pushed to the sender instead, but when the sender recommend action \Act{2}, it will be ex post suboptimal for the receiver to follow the recommendation.
Indeed, an LP requiring deterministic ex post obedience from the sender would become infeasible for this setting.
Whether the sender's optimization problem can still be cast as a different LP, or solved using other techniques, is an interesting direction.

We remark here that the preceding example does not have a state-matching receiver. If the receiver is state-matching and the constraints are feasible, then full revelation of the state is ex post incentive compatible for the receiver. This implies that the linear program for optimizing the sender's utility over ex post incentive compatible signaling schemes has a feasible solution. However, since the LP is more restricted, it is not at all clear that its optimum solution maximizes the sender's utility when the recommendation does not have to be ex post incentive compatible.

\vspace{-12pt}
\subsection{Receiver's strategic behaviors on constraint enforcement}

  We assumed throughout the paper that the receiver's constraints are common knowledge, and that enforcing the constraints is indeed required of the receiver (or in her best interest).
  Aside from the interview example provided in Section~\ref{sec:intro}, such constraints are encountered in real-world scenarios such as a patient's dietary restrictions, the salary cap for a sports team, or the capacity limit of an event or facility.

  Given that we showed constraints to be beneficial for the receiver, one may suspect that a receiver could strategically misrepresent how harsh her constraints are, or --- along the same lines --- claim to be constrained, but not enforce the claimed constraints. This would allow the receiver to obtain more information from a sender. In other words, when constraints are not common knowledge, they become private information of the receiver, which could be strategically manipulated; for instance, in the interview example, Alice could indicate a constraint just to force Bob's hand.

    Naturally, allowing strategic manipulation in the model will significantly complicate the problem, either making it a dynamic information design problem \cite{farhadi:teneketzis:dynamicID:2021} with multiple senders \cite{ambrus:takahashi:multicheaptalk:2008} or a mechanism design problem with incorporated information design modules \cite{roesler2014mechanism}.
    Analyzing a model with private receiver constraints thus constitutes an interesting directions for future work.

\vspace{-8pt}
 \bibliographystyle{splncs04}
\bibliography{local-bibliography,davids-bibliography/names,davids-bibliography/conferences,davids-bibliography/bibliography,davids-bibliography/publications}

\end{document}